\begin{document}
\mainmatter              
\title{The Inverse Task of the Reflexive Game Theory: Theoretical  Matters, Practical Applications and Relationship with Other Issues}
\titlerunning{The Inverse Task} 
%
\author{Sergey Tarasenko}
\authorrunning{Sergey Tarasenko}   
\tocauthor{Sergey Tarasenko}

\institute{Kyoto University, Yoshida honmachi, Kyoto 606-8501, Japan\\
\email{infra.core@gmail.com}}

\maketitle              

\begin{abstract}        
The Reflexive Game Theory (RGT) has been recently proposed by Vladimir Lefebvre to model behavior of individuals in groups. The goal of this study is to introduce the Inverse task. We consider methods of solution together with practical applications. We present a brief overview of the RGT for easy understanding of the problem. We also develop the schematic representation of the RGT inference algorithms to create the basis for soft- and hardware solutions of the RGT tasks. We propose a unified hierarchy of schemas to represent humans and robots. This hierarchy is considered as a unified framework to solve the entire spectrum of the RGT tasks. We conclude by illustrating how this framework can be applied for modeling of mixed groups of humans and robots. All together this provides the exhaustive solution of the Inverse task and clearly illustrates its role and relationships with other issues considered in the RGT.
\keywords {Reflexive Game Theory (RGT), group behavior, society behavior, RGT Forward Task, RGT Inverse Task, Asimov's Laws of Robotics, robots in RGT, mixed groups of humans and robots, human-robot societies}
\end{abstract}

\section{Introduction}
\label{intro}

The Reflexive Game Theory (RGT) has been entirely developed by Lefebvre \cite{lef1,lef2} and is based on the principles of \textit{anti-selfishness or egoism forbiddeness} \cite{lef1,lef2} and human \textit{reflexion processes} \cite{lef3}. Therefore RGT is based on the human-like decision-making processes.  The main goal of the theory is to model behavior of individuals in the groups. It is possible to predict choices, which are likely to be  made by each individual in the group, and influence each individual's decision-making due to make this individual to make a certain choice. In particular, the RGT can be used to predict terrorists' behavior \cite{lef4}. 

In general, the RGT is a simple tool to predict behavoir of invididuals and influence individuals' choices. Therefore it makes possible to control the individuals in the groups by guiding their behavoir (decision-making, choices) by means of the corresponding influences.

On the other hand, now days robots have become an essential part of our life. One of the purposes robots serve to is to substitute human beings in dangerous situations and environments, like defuse a bomb or radioactive zones etc. 

In contrast, human nature shows strong inclinations towards the risky behavior, which can cause not only injuries, but even threaten the human life. The list of these reasons includes a wide range starting from irresponsible kids' behavior to necessity to find solution in a critical situation. In such a situation, a robot should full-fill a function of refraining humans from doing risky actions and perform the risky action itself, if needed. 

However, robots are forbidden and should not physically force people, but must convince people on the mental level to refrain from doing a risky action. This method is more effective rather than a simple physical compulsion, because humans make the decisions (choices) themselves and treat these decisions as their own. Such technique is called a \textit{reflexive control} \cite{lef3}. 

The task of finding appropriate reflexive control is closely related with the Inverse task, when we need to find suitable influence of one subject on another one or on a group of subject on the subject of interest. Therefore, it is needed to develop the framework of how to solve the Inverse task. This is the primary goal of this study. 

However, for better understanding of the gist of the Inverse task and its intrinsic relationships with other issues of the RGT, we introduce the entire spectrum of the tasks, which can be solved by the RGT. This forms the scope of inference algorithms used in the RGT. We present the RGT algorithms in the form of the \textit{schemas of control systems} that can be instantly applied for developement of soft- or/and hardware solutions. We develop a hierarchy of control systems for abstract individual (including human subject) and robotic agent (robot) based on these control schemas. Finally, we illustrate application of the Inverse task together with other RGT inference algorithms to model robot's behavior in the mixed groups of humans and robots. 

\section{Brief Overview of the Reflexive Game Theory (RGT)}
\label{overview}

\subsection{Representation of groups: graphs, polynomials and stratification tree}
\label{repres}

The RGT deals with groups of abstract subjects (individuals, humans, autonomous agents etc). 
Each subject is assigned a unique variable (\textit{subject variable}). Any group of 
subjects is represented in the shape of \textit{fully connected graph}, which is called 
a \textit{relationship graph}. Each vertex of the graph corresponds to a single subject.
Therefore the number of vertices of the graph is in one-to-one correspondence with overall number of subjects in the groups. Each vertex is named after the corresponding subject variable.

The RGT uses the set theory and the Boolean algebra as the basis for calculus. Therefore the values of subject variables are elements of Boolean algebra. 

All the subjects in the group can have either alliance or conflict relationship. The relationships are identified as a result of
group macroanalysis. It is suggested that the installed relationships can be changed. The relationships are illustrated with graph ribs. The solid-line ribs correspond to alliance, while dashed ones are considered as conflict. For mathematical analysis alliance is considered to be conjunction (multiplication) operation ($\cdot$),  and conflict is defined as disjunction (summation) operation (+). 

The graph presented in Fig.~\ref{fig:fig1}a or any graph containing any sub-graph isomorphic to this graph are not decomposable. In this case, the subjects are excluded from the group one by one, until the graph becomes decomposable.  The exclusion is done according to the importance of the other subjects for a particular one \cite{lef1,lef2}. Any other fully connected graphs are decomposable. Any decomposable graph can be presented in an analytical form of a corresponding \textit{polynomial}. Any relationship graph of three subjects is decomposable (see \cite{lef1,lef2}). 

Consider three subjects $a, b$ and $c$. Let subject $a$ is in alliance with other subjects, while subjects $b$ and $c$ are in conflict (Fig.~\ref{fig:fig1}b). The polynomial corresponding to this graph is $a(b+c)$.
\begin{figure}
\centering
\includegraphics[height=2cm]{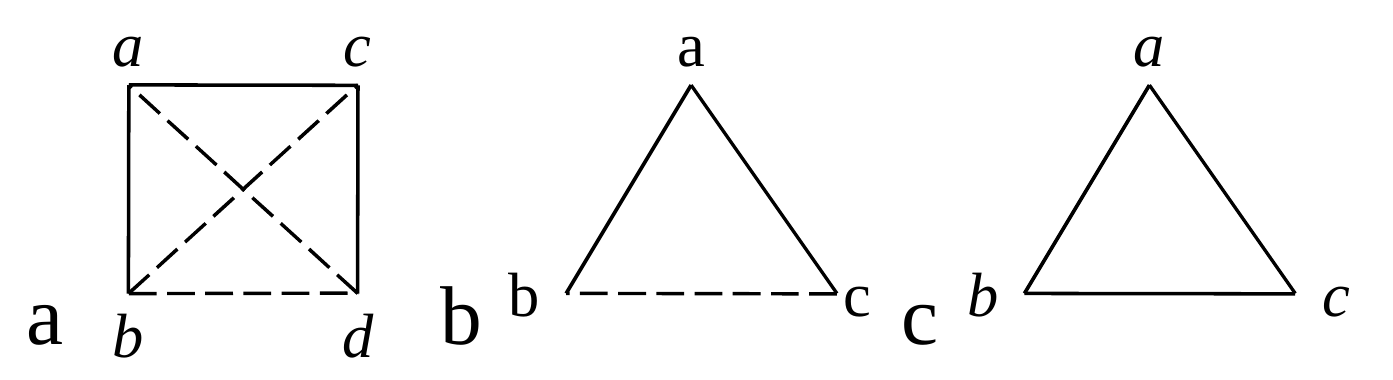}
\caption{The relationship graphs.}
\label{fig:fig1}
\end{figure}
\begin{figure}
\centering
\includegraphics[height=2cm]{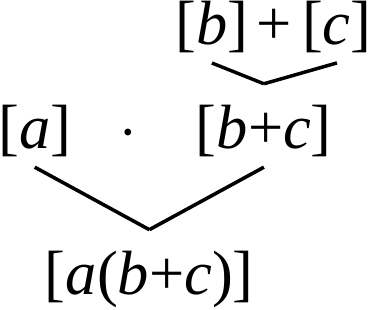}
\caption{Polynomial Stratification Tree. Polynomials $[a], [b]$ and $[c]$ 
are elementary polynomials.}
\label{fig:fig2}
\end{figure}

Regarding a certain relationship, the polynomial can be stratified (decomposed) into \textit{sub-polynomials} 
\cite{lef1, lef2}. Each sub-polynomial belongs to a particular level of stratification. If the stratification regarding alliance was first built, then the stratification regarding the conflict is implemented on the next step. The stratification procedure finalizes, when the \textit{elementary polynomials}, containing a single variable, are obtained after a certain stratification step. 

The result of stratification is the \textit{Polynomial Stratification Tree (PST)}. It has been proved that each non-elementary polynomial can be stratified in an unique way, i.e., each non-elementary polynomial has only one corresponding PST (see \cite{bat:lef} considering one-to-one correspondence between graphs and polynomials). Each higher level of the tree contains polynomials simpler than the ones on the lower level. For the purpose of stratification the polynomials are written in square brackets. The PST for $a(b+c)$ polynomial is presented in Fig.\ref{fig:fig2}.

Next, we omit the branches of the PST and from each non-elementary polynomial write in top right corner its sub-polynomials. The resulting tree-like structure is called a \textit{diagonal form}\cite{lef1, lef2, lef5, lef6}. Consider the diagonal form corresponding to the PST in Fig.~\ref{fig:fig2}:
\[\begin{array}{*{20}{c}}  {} & {} & {[b] + [c]}  \\    {} & {[a][b + c]} & {}  \\    {[a(b + c)]} & {} & {\;\;\;\;\;\;\;\;\;\;\;\;\; .}  \\ \end{array}\]

Hereafter, the diagonal form is considered as a function defined on the set of all subsets of the \textit{universal set}. The universal set contains the \textit{elementary actions}. For example, these actions are actions $\alpha$ and $\beta$. By definition, the \textit{Boolean algebra} of the universal set includes four elements: $1 = \{\alpha , \beta \}$, $ \{\alpha\}$, $\{\beta\}$ and the empty set 0 = $\{\} = \O$. These elements are all the possible subsets of universal set and considered as alternatives that each subject can choose. The alternative $0=\{\}$ is interpreted as an inactive or idle state. In general, Boolean algebra consists of $2^n$ alternatives, if universal set contains $n$ actions. 

Accroding to definition given by Lefebvre \cite{lef5}, we present here exponential operation defined by formula 
\begin{equation}
\label{expfrom}
P^W = P + \overline{W} \ , 
\end{equation}

where $\overline{W}$  stands for negation of $W$ \cite{lef1,lef2,lef4}. 

This exponential operation is used to fold the diagonal form. During the folding, round and square brackets are considered to be interchangeable. The following equalities are also considered to be true: $x + \overline{x}  = 1, x + 0 = x$ and $x + 1 = 1$. Next we implement folding of diagonal form of polynomial $a(b+c)$:
\[\begin{array}{*{20}{c}}
   {} & {} & {[b] + [c]} & {} & {} & {} & {}  \\
   {} & {[a][b + c]} & {} & {} & {} & {[a]([b + c] + \overline{[b] + [c]} )} & {}  \\
   {[a(b + c)]} & {} & {} &  =  & {[a(b + c)]} & {} & { = a(b + c) + \overline a \; .}  \\
\end{array}\]

It is considered that the levels of the PST represent different processing levels of natural or artificial cognitive system. Each level is considered as an images. The root of the tree is the input into the cognitive system and, therefore can be considered as the image of the  world (environment including self and others), perceived by the subject. 

As it follows from the PST, there is a hierarchy of images, corresponding to a particular cognitive level. During processing along this hierarchy in the \textit{bottom-up} manner, the image on the lower level undergoes an extensive process of simplification by the means of decomposition into simpler parts on the higher level. These parts are considered to be the images of the image on the previous level. Therefore, the images on the second level are different representions of the original image of the world. This procedure repeats until we obtain elementary part (elementary polynomials) \cite{lef1, lef2}.

On the other hand, the PST folding procedure can be referred as  \textit{top-down} intergration process of simpler images from the higher levels. 

Therefore, the stratification procedure of original polynomial together with the folding procedure of the diagonal form illustrate the interplay of \textit{bottom-up} and \textit{top-down} information processes, which are widely imployed in biological \cite{kobtan, korner1, luc:mal, korner2} and artificial \cite{fuku, poggio, serre3} information processing systems. The idea of hierarchical structure is highly coherent with hierarchical organization of majority of natural (inanimate objects) and biological (living creatures) entities.  Furthermore, it has been shown that hierarchical structure is intrinsic for the relationships in societies of insects \cite{hienze},  animals \cite{chase1, chase2, buston} and human beings.

Therefore hierarchical representation of the groups in the form of PST correspond to extraction of the hierarchical structure of the given group, while fusion of the PST and its diagonal form with diagonal form folding procedure closely resembles the way of information processing within a single independent congnitive system as discussed above. Thus, RGT imploys the fundamental principles of hierarchical organization on both group (reflects structure of the groups) and individual (illustrates information processing within independent cognitive system of a single unit) levels. This makes RGT universal tools that mildly bridges the gap between representation and analysis.
 
\subsection{The Decision Equation: definition and solution}
\label{deceq}

The goal of each subject in a group is to choose an alternative from the set of alternatives under consideration. 
To obtain choice of each subject, we consider the \textit{decision equations}, which contain subject variable in the left-hand side and the result of diagonal form folding in the right-hand side: 
\begin{eqnarray*}
a = (b+c)a + \overline{a}  \\
b = (b+c)a + \overline{a}  \\
c = (b+c)a + \overline{a}
\end{eqnarray*}
To find solution of the decision equations, we consider the following equation:
\begin{equation}
\label{canfrom}
x = Ax + B\overline{x}  \;  ,
\end{equation}
where  $x$  is the subject variable, and  $A$ and $B$ are some sets.  Eq.(\ref{canfrom}) represents \textit{the canonical form of decision equation}. This equation has solution if and only if the set $B$ is contained in set $A$: $A \supseteq B$. If this requirement is satisfied, then eq.(\ref{canfrom}) has at least one solution from the interval $A \supseteq x \supseteq B$ \cite{lef4}. Otherwise, the decision equation has no solution, and it is considered that subject cannot make a decision. In such situation, the subject is in frustration state.

Therefore, to find solutions of decision equation, one should first transform it into the \textit{canonical form}. Out of three presented equations only the decision equation for subject $a$ is in the canonical form, while other two should be transformed.
We consider explicit transformation only of decision equation for subject $b$ \cite{taras}:\\  \\
$a(b+c)+  \overline{a} =  ab+ac+ \overline{a} = ab + (ac+ \overline{a})b+(ac+\overline{a}) \overline{b} =  (a+  \overline{a} +ac)b+(ac+  \overline{a}) \overline{b} = (1+ac)b+(ac+ \overline{a}) \overline{b}  =  b+(ac +  \overline{a}) \overline{b} = b+(ac +  \overline{a}c +\overline{a}) \overline{b} = b+(c +  \overline{a}) \overline{b}$. \\

Therefore, 
\begin{equation}
\label{canonB}
b = b+(c +  \overline{a}) \overline{b}. 
\end{equation}

The transformation of equation for subject $c$ be can be easily derived by analogy: $c = c + (b +  \overline{a}) \overline{c}$.

Next we consider two tasks, which can be formulated regarding the decision equation in the canonical form and provide methods to solve each task.

\subsection{The Forward Task}
\label{forwardT}

The variable in the left-hand side of the decision equation in canonical form is the variable of the equation, while other variables are considered as influences on the subject from the other subjects. The  \textit{Forward task} is formulated as a task to find the possible choices of a subject of interest, when the influences on him from other subjects are given.

After transformation of arbitral decision equation into its canonical form, the sets $A$ and $B$ are functions of other subjects' influences. For example, if we consider group of subjects $a$, $b$, $c$, etc. togehter with the abstract representation of decision equation in canonical form for subject $a$, the sets $A$ and $B$ will be the functions of subject variables $b$, $c$, etc. :
\begin{equation}
\label{ftcnf}
a = A(b,c,...)a + B(b,c,...)\overline{a}  \  .
\end{equation}

In the case of only three subjects $a$, $b$ and $c$, $A(b,c,...) = A(b,c)$ and $B(b,c,...) = B(b,c)$.

All the influences are presented in influence matrix (Table \ref{infMat}). The main diagonal of influence matrix contains the subject variables. The rows of the matrix represent influences of the given subject on other subjects, while columns represent the influences of other subjects on the given one. The influence values are used in decision equations.

\begin{table}
\caption{Influence Matrix}
\label{infMat}
\begin{center}
\begin{tabular}{|c|c|c|c|}
\hline
{}&a&b&c\\
\hline
\rule{0pt}{12pt}a&a&$\{\alpha\}$&$\{\beta\}$\\[2pt]
\hline
\rule{0pt}{12pt}b&$\{\beta\}$&b&$\{\beta\}$\\[2pt]
\hline
\rule{0pt}{12pt}c&$\{\beta\}$&$\{\beta\}$&c\\[2pt]
\hline
\end{tabular}
\end{center}
\label{table:tab1}
\end{table}

For subject $a$: $a = (\{\beta\}+\{\beta\}) a + \overline{a}  \Rightarrow  a = \{\beta\}a + \overline{a}$.

For subject $b$: $b = b + (\{\alpha\}\{\beta\}+\overline{\{\alpha\}})\overline{b}    \Rightarrow b = b + 
\{\beta\}\overline{b}$.

For subject $c$: $c = c + (\{\beta\}\{\beta\}+\overline{\{\beta\}})\overline{c}  \Rightarrow 
c = c+(\{\beta\} +\{\alpha\})\overline{c}  \Rightarrow  c = 1$.

Equation for subject $a$ does not have any solutions, since set $A = A(b,c) = \{\beta\}$ is contained in set $B = B(b,c) = 1$: $A \subset B$.  Thus, subject $a$ cannot make any decision. Therefore he is considered to be in frustration state.

Equation for subject $b$ has at least one solution, since $ A = A(b,c) = 1 = \{\alpha, \beta\}\supseteq B = B(b,c) = \{\beta\}$. The solution belongs to the interval $1\supseteq b \supseteq \{\beta\}$. Therefore subject $b$ can choose any alternative from Boolean algebra, which contains alternative $\{\beta\}$. These alternatives are $1 = \{\alpha,\beta\}$ and $\{\beta\}$.

Equation for subject $c$ turns into equality $c = 1$. This is possible only in the case, when $A(b,c) \equiv B(b,c)$. Here $A = B = 1$. 

\subsection{The Inverse Task}
\label{inverseT}

In contrast to the Forward task, the \textit{Inverse task} is formulated as a task to find all the simultaneous (or joint) influences of all the subjects together on the subject of interest that result in choice of a particular alternative or subset of alternatives. We call the subject of interest to be a \textit{controlled subject}.

Let subject $a$ be a controlled subject and $a^*$ is a fixed value, representing an  alternative or subset of alternatives, which subjects $b$, $c$, etc. want subject $a$ to choose. We call value $a^*$ to be a \textit{target choice}. By substituting subject variable $a$ with fixed value $a^*$, we obtain the \textit{influence equation}. If we substitute the subject variable $a$ with fixed value $a^*$ in the canonical form of the decision equation (eq. (\ref{ftcnf})), we obtain \textit{the canonical form of the influence equation}: 
\begin{equation}
\label{infG1}
a^* = A(b,c,...)a^* + B(b,c,...)\overline{a^*}  \;  ,
\end{equation}

For only three subjects $a$, $b$ and $c$, $A(b,c,...) = A(b,c)$ and $B(b,c,...) = B(b,c)$.

In contrast to the decision equation, which is equation of a single variable, the influence equation is the equation of multiple variables. However, the number of variables of influence equation is not trivial question. 
In fact, the number of variables in influence equation can be less then $(n-1)$, where $n$ is the total number of subjects in the group. There are groups, in which sets $A$ and $B$ are functions of less than $(n-1)$ variables (see Appendix \ref{appen1}). Therefore the variables that present in influence equation are called \textit{effective variables}.

The Inverse task is by definition\footnote{We need a system of influence equations because solutions of the influence equation $a^* = A(b,c,...)a^* + B(b,c,...)\overline{a^*}$ itself only guaratee that the original decision equation $a = A(b,c,...)a + B(b,c,...)\overline{a}$ turns into true equality, but it is not guaranteed that these solutions are the only ones that turn decision equation into true equality.} 
formalized as to find all the joint solutions of all subjects in the group, except for the controlled one, when the target choice is represented by interval $\chi_1 \supseteq a^* \supseteq \chi_2$, where $\chi_1$ and $\chi_2$ are some sets and $\chi_1 \supset \chi_2$. In such a case, to solve the Inverse task, one should solve the system of influence equations:
\begin{empheq}[left=\empheqlbrace]{align}
A(b,c,...) = \chi_1 \label{sys11G} \\
B(b,c,...) = \chi_2  \label{sys21G} 
\end{empheq}

If the target choice is a single alternative, then $\chi_1 = \chi_2 = a^*$.

\textit{The solutions of the system (\ref{sys11G}-\ref{sys21G}) are considered as reflexive control strategies.}

The solution of the Inverse task in particular is characterized from two points. The first point is whether it is required to find the influence of a particular single subject or joint influences of a group of subjects. The second one is whether the target choice is represented as a single alternative or as an interval of alternatives. 

To illustrate these points,  we introduce a particular group of subjects. Let subjects $a$ and $b$ are in alliance with each other and in conflict with subject $c$. 
The polynomial corresponding to this graph is $ab+c$. The diagonal form corresponding to this polynomial and its folding is  
\[\begin{array}{*{20}{c}}
   {} & {} & {[a][b]} & {} & {}  \\
   {} & {[ab]} & {} & { + [c]} & {}  \\
   {[ab + c]} & {} & {} & {} & { = ab + c}  \\
\end{array}\]

Therefore the decision equation for all the subjects in the group is 
\begin{equation}
x = ab + c,
\label{eq1}
\end{equation}
where $x$ can be any subject variable $a$, $b$ or $c$. 

\textit{Influence of a single subject vs joint influences of a group.} First we consider example, when the influence of a single subject is required. Let subject $b$ makes influence $\{\alpha\}$ and  $a^* = \{\alpha\}$. Then we need to find influences of a single subject $c$, which result in solution $a^* = \{\alpha\}$ of decision equation $a = ab+c$. 

The canonical form of this influence equation is  $a^* = (b+c)a^*+c\overline{a^*}$. Since  $a^* = \{\alpha\}$,  $\chi_1= \chi_2=\{\alpha\}$, we obtain a system of equations:
\begin{empheq}[left=\empheqlbrace]{align}
\{\alpha\} + c = \{\alpha\} \label{sys11} \\
c = \{\alpha\}  \label{sys21} 
\end{empheq}
Therefore, the straight forward solution of this system is $c = \{\alpha\}$.

This simple example illustrates the very gist of the \textit{Inverse task} - to find the appropriate influences, which result in target choice.

Next, we consider that influence of subject $b$ is not known. Therefore, we obtain system
\begin{empheq}[left=\empheqlbrace]{align}
b + c = \{\alpha\} \label{sysinf11} \\
c = \{\alpha\}  \label{sysinf12} 
\end{empheq}

In this case, we need to find the values of variable $b$, which together with $c$, result in solution  $a^* = \{\alpha\}$. In other words, we need to find all the pairs $(b,c)$, resulting in solution  $a^* = \{\alpha\}$. These pairs are solutions of the system (\ref{sysinf11}-\ref{sysinf12}). Therefore, we run all the possible values of variable $b$ and check if the first equation of the system (\ref{sysinf11}-\ref{sysinf12}) turns into true equality: \\
$b = 1: 1 + \{\alpha\} = 1 \Rightarrow 1 \neq \{\alpha\}$; \\
$b = \{\alpha\}: \{\alpha\} + \{\alpha\} = \{\alpha\} \Rightarrow \{\alpha\} = \{\alpha\}$; \\
$b = \{\beta\}: \{\beta\} + \{\alpha\} = 1 \Rightarrow 1 \neq \{\alpha\}$; \\
$b = 0: 0 + \{\alpha\} = \{\alpha\} \Rightarrow \{\alpha\} = \{\alpha\}$. 

Therefore, out of four possible values of variable $b$, only two values $\{\alpha\}$ and $0$ are appropriate. Thus, we obtain two pairs $(b,c)$: $(\{\alpha\},\{\alpha\})$ and $(\{\alpha\},0)$.

\textit{A single target alternative vs interval of alternatives.} In the previous examples we considered a target choice to be only a single alternative. Here we illustrate the case, when a target choice is an interval. Let $b = \{\beta\}$, and $1 \supseteq a^* \supseteq \{\alpha\}$. To find corresponding influences of subject $c$, we solve the system of equations:
\begin{empheq}[left=\empheqlbrace]{align}
\{\beta\} + c = 1 \label{sysinf21} \\
c = \{\alpha\}  \label{sysinf22} 
\end{empheq}

Again, we instantly obtain the solution of this system: $c = \{\alpha\}$.

In this section, we have formulated the Inverse task in general and considered its particular formalization depending on the number of influences and what is the target choice. However, we do not have a method to solve arbitral influence equation. Therefore, we solve this problem in the next section.

\section{How to Solve an Arbitral Influence Equation}
\label{solveInvT}

As an introduction for this section, we consider the fundamental proposition, which will be the conner stone to solve the influence equations.
\begin{proposition}
\label{lem1}
Let P and Q be some abstract sets. Then $P\overline{Q} + \overline{P}Q = 0 \Leftrightarrow P = Q$. 
\end{proposition}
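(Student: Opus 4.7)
The plan is to prove the biconditional by splitting it into its two implications, since both directions are short consequences of Boolean algebra over the power set of the universal set. I will treat $+$ as set union, multiplication as intersection, and the overline as complementation relative to the universal set, so that $0$ denotes the empty set and the identities $X\overline{X}=0$, $X+X=X$, and ``$X+Y=0 \Rightarrow X=0 \text{ and } Y=0$'' are all available.

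For the easy direction $(\Leftarrow)$, I would just substitute $Q$ by $P$ in $P\overline{Q}+\overline{P}Q$. Each summand becomes $P\overline{P}$, which equals $0$ by the Boolean law of the excluded middle for intersections, and so the whole expression collapses to $0+0=0$.

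For the forward direction $(\Rightarrow)$, I would argue as follows. Assume $P\overline{Q}+\overline{P}Q=0$. Since a union of two sets is empty only when both sets are empty, this forces $P\overline{Q}=0$ and $\overline{P}Q=0$ separately. The first equation says that no element of $P$ lies outside $Q$, i.e.\ $P\subseteq Q$; the second, symmetrically, gives $Q\subseteq P$. The two inclusions together give $P=Q$, completing the proof.

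There is no real obstacle here; the only judgment call is how much of the underlying Boolean/set-theoretic machinery to cite explicitly versus take for granted. Since the excerpt has already established that the calculus is carried out in the Boolean algebra of subsets of the universal set, I would keep the justifications brief and simply invoke the standard identities $X\overline{X}=0$ and ``$X+Y=0 \Rightarrow X=Y=0$'' in one line each, rather than re-derive them. The whole proof should occupy just a few lines.
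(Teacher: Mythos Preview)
Your proof is correct. The sufficiency direction is identical to the paper's, but for necessity you take a different route. You invoke the set-theoretic fact that $X+Y=0$ forces $X=0$ and $Y=0$, then read $P\overline{Q}=0$ and $\overline{P}Q=0$ as the inclusions $P\subseteq Q$ and $Q\subseteq P$. The paper instead argues purely by Boolean rewriting: it adds $P$ to both sides of $P\overline{Q}+\overline{P}Q=0$, simplifies to $P+\overline{P}Q=P$, then expands and regroups to obtain $Q+P\overline{Q}+\overline{P}Q=P$, and finally uses the hypothesis once more to collapse the symmetric-difference term to $0$, leaving $Q=P$. Your argument is shorter and more transparent because it exploits the order structure (inclusion) of the Boolean algebra directly; the paper's argument stays entirely within the equational calculus, never mentioning $\subseteq$, which fits the algebraic style used elsewhere in the paper but is a bit more laboured. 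Either is perfectly acceptable here.
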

\begin{proof}
\textit{Necessity.} Let $P\overline{Q} + \overline{P}Q = 0$, then 
\begin{eqnarray*}
P\overline{Q} + \overline{P}Q = 0 \Rightarrow P\overline{Q} + \overline{P}Q + P = P \Rightarrow P + \overline{P}Q = P \Rightarrow \\
P(Q + \overline{Q}) + \overline{P}Q = Q + P\overline{Q} + \overline{P}Q = P \Rightarrow Q = P. \\
\end{eqnarray*}
Therefore if $P\overline{Q} + \overline{P}Q = 0$, then $P = Q$.

\textit{Sufficiency.} Let  $P = Q$, then $P\overline{P} + \overline{P}P = 0$. $\Box$ 
\end{proof}

Now let us consider the new type of equation:
\begin{equation}
A_1x + B_1\overline{x} = 0
\label{equation:eq3}
\end{equation}
This equation has solution if and only if $\overline{A_1} \supseteq x \supseteq B_1$. 

\subsection{Solving Influence Equations}
\label{infEqsInt}

There are three operations defined on the Boolean algebra. They are conjunction ($\cdot$ or multiplication), disjunction (+ or summation) and negation ($\overline{x}$, where $x$ is subject variable). The negation operation is unary operation, while other two operations are binary. Using combination of these three operations, we can compose any influence equation. Since, it is obvious how to solve the equation including only unary operation, we discuss how to solve influence equations including a single binary operation.

For this perpose, we consider two abstract subject variables $x_1$ and $x_2$ and abstract alternative $\chi$.


\begin{lemma}
\label{lem}
The solution of equation 
\begin{equation}
\label{disE}
x_1 + x_2 = \chi
\end{equation}
regarding variable $x_i$, where $i = 1,2$, is given by the interval 
$\chi \supseteq x_i \supseteq (\overline{\chi}x_j + \overline{x_j}\chi)$, where $j = 1,2; j \neq i$.
\label{lemma1}
\end{lemma}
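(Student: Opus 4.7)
The plan is to convert the disjunction equation into the homogeneous form of equation (\ref{equation:eq3}) for the unknown $x_i$, and then read off the solution interval using the criterion already stated there. By the symmetry of the roles of $x_1$ and $x_2$ in $x_1 + x_2 = \chi$, it suffices to carry out the derivation for $i=1$, $j=2$; the other case follows by relabeling.

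First I would invoke Proposition~\ref{lem1} with $P = x_1 + x_2$ and $Q = \chi$ to rewrite $x_1 + x_2 = \chi$ as the equivalent zero-equation
\[
(x_1 + x_2)\overline{\chi} + \overline{x_1 + x_2}\,\chi = 0.
\]
Applying De Morgan's law $\overline{x_1+x_2} = \overline{x_1}\,\overline{x_2}$ together with distributivity gives $x_1\overline{\chi} + x_2\overline{\chi} + \overline{x_1}\,\overline{x_2}\,\chi = 0$. The middle term $x_2\overline{\chi}$ carries neither an $x_1$ nor an $\overline{x_1}$ factor, so I would split it via $x_2\overline{\chi} = (x_1 + \overline{x_1})\,x_2\overline{\chi}$ and regroup. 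Absorbing the coefficient of $x_1$ through $\overline{\chi}(1+x_2) = \overline{\chi}$, the equation collapses to
\[
\overline{\chi}\,x_1 + \bigl(\overline{x_2}\,\chi + x_2\,\overline{\chi}\bigr)\overline{x_1} = 0,
\]
which is in the canonical form $A_1 x_1 + B_1 \overline{x_1} = 0$ with $A_1 = \overline{\chi}$ and $B_1 = \overline{x_2}\,\chi + x_2\,\overline{\chi}$.

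Finally I would invoke the stated solvability criterion for equation (\ref{equation:eq3}): the solution set is precisely $\overline{A_1} \supseteq x_1 \supseteq B_1$, i.e.\ $\chi \supseteq x_1 \supseteq \overline{\chi}\,x_2 + \overline{x_2}\,\chi$, which is exactly the claimed interval (with $j = 2$). Identical manipulations with the roles of $x_1$ and $x_2$ exchanged handle the case $i = 2$.

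The main obstacle I expect is purely the bookkeeping around the middle term: the factorization $x_2\overline{\chi}(x_1 + \overline{x_1})$ and the subsequent simplifications $\overline{\chi}\,x_1 + x_1 x_2 \overline{\chi} = \overline{\chi}\,x_1$ and $\overline{x_1}\,\overline{x_2}\,\chi + \overline{x_1}\,x_2\,\overline{\chi}$ must be executed carefully so that no stray term survives outside the $x_1$/$\overline{x_1}$ partition. Once that algebra is clean, the result follows immediately from Proposition~\ref{lem1} and the solution formula already recorded for equation (\ref{equation:eq3}); no further case analysis is needed.
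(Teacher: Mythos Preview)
Your proposal is correct and follows essentially the same approach as the paper: invoke Proposition~\ref{lem1} with $P=x_1+x_2$, $Q=\chi$, expand via De~Morgan, cast the result into the canonical form $A_1x_1+B_1\overline{x_1}=0$, read off the interval from the criterion for equation~(\ref{equation:eq3}), and appeal to symmetry for the other index. The only difference is that you make the splitting $x_2\overline{\chi}=(x_1+\overline{x_1})x_2\overline{\chi}$ explicit, whereas the paper passes directly to the canonical form.
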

\begin{proof}
According to Proposition 1, $P = x_1 + x_2$, $Q = \chi$, $\overline{P} = \overline{x_1 + x_2} =\overline{x_1}$ $\overline{x_2}$ and $\overline{Q} = \overline{\chi}$.

Therefore, $P\overline{Q}  + \overline{P}Q = (x_1 + x_2)\overline{\chi} +\overline{x_1}$ $\overline{x_2}\chi  = 
x_1\overline{\chi} + x_2\overline{\chi} +\overline{x_1}$ $\overline{x_2}\chi$.
Consequently, we obtain eq.(\ref{disEq}):
\begin{equation}
\label{disEq}
x_1\overline{\chi} + x_2\overline{\chi} +\overline{x_2} \chi \overline{x_1} = 0
\end{equation}

We solve eq.(\ref{disEq}) regarding variable $x_1$. First, we transform eq.(\ref{disEq}) into canonical form:
\begin{equation}
\label{disEqcn}
\overline{\chi}x_1 + (\overline{\chi}x_2 + \chi \overline{x_2}) \overline{x_1} = 0
\end{equation}

Therefore, the solution of eq.(\ref{disEqcn}) is given by the interval 
\begin{equation}
\label{disEqcnInt}
\chi \supseteq x_1 \supseteq (\overline{\chi}x_2 + \overline{x_2}\chi) .
\end{equation}

Since variables $x_1$ and $x_2$ are interchangable and it is possible to solve eq.(\ref{disEq}) regarding variable $x_2$ as well, the general form of solution of eq.(\ref{disE}) is the interval 
\begin{equation}
\label{disEqcnIntG}
\chi \supseteq x_i \supseteq (\overline{\chi}x_j + \overline{x_j}\chi) .
\end{equation}

where $i = 1,2$ and $j = 1,2; j \neq i. \Box$
\end{proof} 


\begin{lemma}
\label{lem3}
The solution of equation 
\begin{equation}
\label{conE}
x_1 x_2 = \chi
\end{equation}
regarding variable $x_i$, where $i = 1,2$, is given by the interval 
$(\chi x_j + \overline{\chi} \ \overline{x_j}) \supseteq x_i \supseteq \chi $, where $j = 1,2; j \neq i$.
\label{lemma2}
\end{lemma}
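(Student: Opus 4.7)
The plan is to mimic the proof of Lemma 2, exploiting the De Morgan dual between summation and conjunction. I would apply Proposition~\ref{lem1} with $P = x_1 x_2$ and $Q = \chi$, so that $\overline{P} = \overline{x_1} + \overline{x_2}$ by De Morgan's law, while $\overline{Q} = \overline{\chi}$. Substituting these into $P\overline{Q} + \overline{P}Q = 0$ yields
\[
x_1 x_2 \overline{\chi} + \overline{x_1}\chi + \overline{x_2}\chi \;=\; 0,
\]
which is the analogue of eq.(\ref{disEq}) for the conjunctive case.

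Next, I would bring this equation into the canonical form $A_1 x_1 + B_1 \overline{x_1} = 0$ of eq.(\ref{equation:eq3}). The term $\overline{x_2}\chi$ carries neither $x_1$ nor $\overline{x_1}$, so I would split it using $1 = x_1 + \overline{x_1}$ and then collect the $x_1$ and $\overline{x_1}$ terms, absorbing $\overline{x_2}\chi \overline{x_1}$ into $\chi \overline{x_1}$. The result should be
\[
(\overline{\chi} x_2 + \chi \overline{x_2}) x_1 + \chi \overline{x_1} \;=\; 0,
\]
so that $A_1 = \overline{\chi} x_2 + \chi \overline{x_2}$ and $B_1 = \chi$. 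Invoking the solvability criterion for eq.(\ref{equation:eq3}) stated right after Proposition~\ref{lem1}, the solution interval for $x_1$ is $\overline{A_1} \supseteq x_1 \supseteq B_1$.

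The main computational hurdle is simplifying $\overline{A_1}$, i.e.\ negating the ``XOR'' expression $\overline{\chi} x_2 + \chi \overline{x_2}$. Applying De Morgan and expanding,
\[
\overline{\overline{\chi} x_2 + \chi \overline{x_2}} = (\chi + \overline{x_2})(\overline{\chi} + x_2) = \chi x_2 + \overline{\chi}\,\overline{x_2},
\]
where the cross terms $\chi \overline{\chi}$ and $x_2 \overline{x_2}$ vanish. This gives the claimed interval $(\chi x_2 + \overline{\chi}\,\overline{x_2}) \supseteq x_1 \supseteq \chi$ for the variable $x_1$.

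Finally, since eq.(\ref{conE}) is symmetric in $x_1$ and $x_2$, the identical argument with the roles exchanged produces $(\chi x_1 + \overline{\chi}\,\overline{x_1}) \supseteq x_2 \supseteq \chi$. Combining both cases yields the uniform form $(\chi x_j + \overline{\chi}\,\overline{x_j}) \supseteq x_i \supseteq \chi$ for $i \neq j$, $i,j \in \{1,2\}$, as required. I expect no conceptual obstacle beyond carefully tracking De Morgan's law on the XOR-like coefficient; the rest is parallel bookkeeping to Lemma~\ref{lem}.
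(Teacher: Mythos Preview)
Your proposal is correct and follows essentially the same route as the paper's own proof: apply Proposition~\ref{lem1} with $P=x_1x_2$, $Q=\chi$, reduce to the canonical form $(\overline{\chi}x_2+\chi\overline{x_2})x_1+\chi\,\overline{x_1}=0$, negate the XOR-type coefficient to obtain $\chi x_2+\overline{\chi}\,\overline{x_2}$, and finish by symmetry in $x_1,x_2$. If anything, you spell out the splitting of $\overline{x_2}\chi$ via $1=x_1+\overline{x_1}$ and the De~Morgan computation of $\overline{A_1}$ more explicitly than the paper does.
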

\begin{proof}
According to Proposition 1, $P = x_1x_2$, $Q = \chi$, $\overline{P} = \overline{x_1x_2} =\overline{x_1}+\overline{x_2}$ and $\overline{Q} = \overline{\chi}$.

Therefore, $P\overline{Q}  + \overline{P}Q = (x_1x_2)\overline{\chi} +(\overline{x_1}$ + $\overline{x_2})\chi  = 
x_2\overline{\chi}x_1 + \overline{x_1}\chi +\overline{x_2}\chi$.

Thus, we obtain eq.(\ref{conEq}):
\begin{equation}
\label{conEq}
x_2\overline{\chi}x_1 + \overline{x_1}\chi +\overline{x_2}\chi = 0
\end{equation}

We solve eq.(\ref{conEq}) regarding variable $x_1$. First, we transform eq.(\ref{conEq}) into canonical form:
\begin{equation}
\label{conEqcn}
(\overline{\chi}x_2 + \chi \overline{x_2})x_1 + \chi \overline{x_1} = 0
\end{equation}

Since $\overline{\overline{\chi} x_2 + \chi \overline{x_2} } = \chi x_2 + \overline{\chi} \ \overline{x_2}$, the solution of 
eq.(\ref{conEqcn}) is given by the interval 
\begin{equation}
\label{conEqcnInt}
(\chi x_2 + \overline{\chi} \ \overline{x_2}) \supseteq x_1 \supseteq \chi .
\end{equation}

Since variables $x_1$ and $x_2$ are interchangable and it is possible to solve eq.(\ref{conEq}) regarding variable $x_2$ as well, the general form of solution of eq.(\ref{conE}) is the interval 
\begin{equation}
\label{conEqcnIntG}
(\chi x_j + \overline{\chi} \ \overline{x_j}) \supseteq x_i \supseteq \chi  .
\end{equation}
where $i = 1,2$ and $j = 1,2; j \neq i. \Box$
\end{proof}


Since one bound of the solution intervals for eqs.(\ref{disE}) and (\ref{conE}) are functions of the second variable, we need to run all the possible values of the second variable in order to obtain all possible solutions of these equations in the form of pairs $(x_1, x_2)$.

Next we consider several examples, illustrating application of Lemmas \ref{lemma1} and \ref{lemma2}.

\textit{Example 1}. For illustration, we solve equation $a^* = ba^*+c$. Consider $\chi = a^*$, $x_1 = ba^* $ and $x_2  = c$, we obtain the solution interval for variable $x_2 = c$: $\chi \supseteq c \supseteq (\chi\overline{\chi b} + \overline{\chi} \ \chi b)$. After simplfication, we get interval (\ref{inter1}):
\begin{equation}
\label{inter1}
\chi \supseteq c \supseteq \chi \overline{b}
\end{equation} 

Next we consider examples with particular alternatives. Let it be alternative $\{\alpha\}: \chi = \{\alpha\}$. The solution interval is then $\{\alpha\} \supseteq c \supseteq \{\alpha\}\overline{b}$.  Since the lower bound of this interval is a function of variable $b$, to find all solutions of equation $a^* = ba^*+c$, we calculate value of expression $\{\alpha\}\overline{b}$ for all possible values of variable $b$ (Table \ref{solns1}).

To reesure that solutions are correct, we check that decision equation $a = ba+c$ turns into true equality for the obained pairs $(b,c)$: \\

$(\{\alpha\}, \{\alpha\})$: $\{\alpha\}\{\alpha\} + \{\alpha\} = \{\alpha\} \Rightarrow \{\alpha\} = \{\alpha\}$ is true;

$(\{\alpha\}, 0)$: $\{\alpha\}\{\alpha\} + 0 = \{\alpha\} \Rightarrow \{\alpha\} = \{\alpha\}$ is true;

$(\{\beta\}, \{\alpha\})$: $\{\alpha\}\{\beta\} + \{\alpha\} = \{\alpha\} \Rightarrow \{\alpha\} = \{\alpha\}$ is true;

$(1, \{\alpha\})$: $\{\alpha\}1 + \{\alpha\} = \{\alpha\} \Rightarrow \{\alpha\} = \{\alpha\}$ is true; 

$(1, 0)$: $\{\alpha\}1 + 0 = \{\alpha\} \Rightarrow \{\alpha\} = \{\alpha\}$ is true;

$(0, \{\alpha\})$: $\{\alpha\}0 + \{\alpha\} = \{\alpha\} \Rightarrow \{\alpha\} = \{\alpha\}$ is true. \\

So far, we have illustrated how to solve the influence equation. We as well showed that the pairs $(b,c)$ obtained by solving equation $a^* = ba^*+c$ in accordance with Proposition 1 and Lemmas 1 and 2 are indeed solutions of this equation.

\begin{table}
\caption{Solutions of the influence equation $a^* = ba^*+c$}
\begin{center}
\begin{tabular}{c c c c c}
\hline
\rule{0pt}{12pt}{Values of $b$}&{$\{\alpha\}$}&{$\{\beta\}$}&{1}& {0} \\
\hline
\rule{0pt}{12pt} \multirow{2}{*}{Pairs $(b,c)$} & {$(\{\alpha\},\{\alpha\})$} & {$(\{\beta\},\{\alpha\})$} & {$(1, \{\alpha\})$} & {$(0, \{\alpha\})$} \\[2pt]
\rule{0pt}{12pt} &{$(\{\alpha\},0)$} & {} & {$(1, 0)$} & {} \\[2pt]
\hline
\label{solns1}
\end{tabular}
\end{center}
\label{solns1}
\end{table}

\textit{Example 2}. We consider influence equation for subject $b$ obtained from eq.(\ref{canonB}).
\begin{equation}
\label{ex2eq}
(c+\overline{a})\overline{\chi} + \chi = \chi
\end{equation}

First, we transform the left-hand side of eq.(\ref{ex2eq}):\\

$(c+\overline{a})\overline{\chi} + \chi = c \overline{\chi}+\overline{a} \overline{\chi} + \chi =   c \overline{\chi}+\overline{a} \overline{\chi} + (c+\overline{a}+1)\chi = c + \overline{a} + \chi$. \\

Therefore, eq.(\ref{ex2eq}) can be rewritten as follows:
\begin{equation}
\label{ex2eq1}
c + \overline{a} + \chi = \chi
\end{equation}

Considering, $x_1 = c$ and $x_2 = \overline{a} + \chi$, we instantly obtain the solution interval of eq.(\ref{ex2eq1}): $ \chi \supseteq c \supseteq (\overline{\chi}(\overline{a} + \chi) + \chi (\overline{\overline{a} + \chi})) \Rightarrow \chi \supseteq c \supseteq (\overline{\chi} \ \overline{a} + \chi \overline{\chi}a)$.

Finally,
\begin{equation}
\label{inter2}
\chi \supseteq c \supseteq \overline{\chi} \ \overline{a}
\end{equation}

\textit{Example 3}. Next, we consider influence equation 
\begin{equation}
\label{ex3eq}
ab+\chi = \chi
\end{equation}

Considering, $x_1 = ab$ and $x_2 = \chi$, we instantly obtain the solution interval  $\chi \supseteq ab \supseteq (\chi \overline{\chi} + \overline{\chi}\chi )$ or
\begin{equation}
\label{inter3}
\chi \supseteq ab \supseteq 0
\end{equation}

Therefore, in order to find all solutions of eq.(\ref{ex3eq}), we need to solve the equations
\begin{equation}
 ab = y 
 \end{equation}
where $y$ is any sub-set of set $\chi$ ($y \supseteq \chi$).

Each equation can be solved according to Lemma \ref{lemma2}.

\textit{Example 4}. As a final example, we again consider influence equation $a^* = (b+c)a^*+c\overline{a^*}$ and show how application of Lemma \ref{lemma1} essentially simplifies its solution. We get the system of influence equations:
\begin{empheq}[left=\empheqlbrace]{align}
b + c = \{\alpha\} \ ; \label{sys2a} \\
c = \{\alpha\} \ .  \label{sys2b} 
\end{empheq}

From this system we obtain a single equation:
\begin{equation}
\label{deqs}
b + \{\alpha\} = \{\alpha\} \ .
\end{equation}

According to Lemma \ref{lemma1}, we instantly obtain the solution interval of eq.(\ref{deqs}):
\begin{equation}
\label{inter4}
\{\alpha\} \supseteq b \supseteq 0 \ .
\end{equation}

Thus, eq.(\ref{deqs}) has two solutions: $b = \{\alpha\}$ and $b = 0$. Therefore the solution of system (\ref{sys2a}-\ref{sys2b}) consists of two pairs $(\{\alpha\}, \{\alpha\})$ and $(0,\{\alpha\})$.

To conclude this section, we provide its brief summary. We have shown how to solve the Inverse task by means of influence equations. We have proved two fundamental lemmas, which allow to solve any influence equation regardless of the number of variables. Finally, we have illustrated several examples of how apply these lemmas.

\subsection{Analysis of Extreme Cases 1: Frustration}
\label{frust}

In this section we analyze the situation, when subject can appear in frustration state, from the point of view of the inverse task. Let us consider the polynomial $a(b+c)$ discussed in the section \ref{repres}. The decision equation that corresponds to this polynomial is $x = (b+c)a + \overline{a}$, where $x$ can be any subject variable.

Next we try to find all the pairs $(b,c)$ such that result in selection of a particular alternative by subject $a$.

The decision equation for subject $a$ is $a = (b+c)a + \overline{a}$. The solution interval of this decision equation is $b+c \supseteq a \supseteq 1$. We need to check which alternative subject $a$ can be convinced to choose. To do this, we consider the system of equation for each alternative.

Alternative $\{\alpha\}$:
\begin{empheq}[left=\empheqlbrace]{align}
b + c = \{\alpha\} \label{sys11} \\
1 = \{\alpha\}  \label{sys21} 
\end{empheq}

Alternative $\{\beta\}$:
\begin{empheq}[left=\empheqlbrace]{align}
b + c = \{\beta\} \label{sys12} \\
1 = \{\beta\}  \label{sys22} 
\end{empheq}

Alternative $0=\{\}$:
\begin{empheq}[left=\empheqlbrace]{align}
b + c = 0 \label{sys13} \\
1 = 0  \label{sys23} 
\end{empheq}

In these systems the second equation is incorrect equality. Therefore these systems have no solution.

Alternative $1=\{\alpha, \beta\}$:
\begin{empheq}[left=\empheqlbrace]{align}
b + c = 1 \label{sys14} \\
1 = 1  \label{sys24} 
\end{empheq}
The second equation is correct equality. Therefore this system has solution. 

Thus, out of four possible alternatives, subject $a$ actually can choose only alternative $1=\{\alpha, \beta\}$.
To find solutions, resulting in selection of the alternative $1=\{\alpha, \beta\}$,  we need to solve only eq.(\ref{sys14}), since eq.(\ref{sys24}) turns into the true equality. 

According to Lemma \ref{lemma1}, we instantly obtain the solution interval for eq.(\ref{sys14}): 
\begin{equation}
1 \supseteq b \supseteq \overline{c}
\label{inter51}
\end{equation}

We calculate the pairs $(b,c)$ for all possible values of variable $c$ (Table \ref{tab3}). 

\begin{table}
\caption{Solutions of the influence equation $b + c = 1$}
\label{tab3}
\begin{center}
\begin{tabular}{c c c c c}
\hline
\rule{0pt}{12pt}{Values of $c$}&{$\{\alpha\}$}&{$\{\beta\}$}&{1}& {0} \\
\hline
\rule{0pt}{12pt} \multirow{4}{*}{Pairs $(b,c)$} & {$(\{\beta\},\{\alpha\})$} & {$(\{\alpha\},\{\beta\})$} & {$(0,1)$} & {$(1,0)$} \\[2pt]
\rule{0pt}{12pt} &{$(1,\{\alpha\})$} & {$(1,\{\beta\})$} & {$(\{\alpha\},1)$} & {} \\[2pt]
\rule{0pt}{12pt} &{} & {} & {$(\{\beta\},1)$} & {} \\[2pt]
\rule{0pt}{12pt} &{} & {} & {$(1,1)$} & {} \\[2pt]
\hline
\label{tab3}
\end{tabular}
\end{center}
\end{table}

Therefore, the influence analysis of the decision equation $a = (b+c)a + \overline{a}$ shows that the only alternative that subject $a$ can choose is alternative $1=\{\alpha,\beta\}$. The influence analysis provides us with the set (exhaustive list) of pairs $(b,c)$ of joint influences resulting in selection of alternative $1=\{\alpha, \beta\}$. Therefore, if the pair of influences does not match any pair from this list, the decision equation has no solution and this results in frustration state. 

Summarizing, this section we note that in general there are two sets. The set $\mathbb D$ contains alternatives that a controlled subject can choose. The set $\mathbb U$ is the set of altertanives of the target choice. Therefore, the need to put subject $a$ into frustration state emerges, if the target choice of a controlled subject cannot be made by this subject. In other words, we need to put a subject into frustration state, if $\mathbb D \cap \mathbb U = \O$.

\subsection{Analysis of Extreme Cases 2: What to do with Super-Active Groups}
\label{supac}

Among all the possible groups, there are groups, in which subjects will always choose only the alternative $1=\{\alpha, \beta\}$ regardless of the influence of other subjects. Such groups are called \textit{super-active groups}.

Next we consider one special case of super active groups - the $homogenous$ groups. The group is called $homogenous$, if all the subjects in the group are connected with the same relationship.

Here we provide proof of the lemma about homogenous groups originally formulated by Lefebvre \cite{lef1,lef2}.

\begin{lemma}
\label{lem4}
Any homogenous group is the super-active group.
\label{lemma3}
\end{lemma}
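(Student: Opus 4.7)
The plan is to split into the two possible homogeneous configurations (all alliances or all conflicts), show in each case that the Polynomial Stratification Tree collapses to just two levels (root plus elementary polynomials), and conclude that the folded diagonal form evaluates to the constant $1$, forcing every decision equation to reduce to $x = 1$.

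First I would write down the two polynomials explicitly. If every pair of subjects is in alliance, the relationship graph is the fully-connected graph with all solid ribs, and by the alliance-as-product convention of Section \ref{repres}, the polynomial is $a_1 a_2 \cdots a_n$. If every pair is in conflict, all ribs are dashed and the polynomial is $a_1 + a_2 + \cdots + a_n$. In both cases, a single stratification step (regarding alliance in the first case, conflict in the second) sends the root directly to the multiset of elementary polynomials $[a_1], [a_2], \ldots, [a_n]$, so the PST consists of exactly two levels and there is no intermediate non-elementary polynomial to fold.

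Next I would execute the folding. Writing $P = a_1 a_2 \cdots a_n$ (or $P = a_1+\cdots+a_n$ in the conflict case), the diagonal form has the root $[P]$ at the bottom and the layer $[a_1]\cdots[a_n]$ (respectively $[a_1]+\cdots+[a_n]$) immediately above it. Applying the exponential operation (\ref{expfrom}) to the root yields
\begin{equation*}
[P] + \overline{[a_1]\bullet\cdots\bullet[a_n]} \;=\; P + \overline{P} \;=\; 1,
\end{equation*}
where $\bullet$ is the appropriate homogeneous operation. The identity $P + \overline{P} = 1$ holds because the second operand in the folding is, after evaluation, literally $P$ itself. Hence the decision equation for every subject $a_i$ becomes $a_i = 1$, which in canonical form reads $a_i = 1\cdot a_i + 1\cdot\overline{a_i}$; by the solvability criterion of Section \ref{deceq} this has the unique solution $a_i = 1 = \{\alpha,\beta\}$, and crucially the right-hand side contains no other subject variables, so the choice is independent of any influences.

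I do not expect any real obstacle here; the only thing to be careful about is justifying that the first stratification of a homogeneous polynomial really does land directly on the elementary polynomials (so that the tree has just two levels), which follows from the definition of stratification since all subject variables appear under a single homogeneous operation. Once that observation is in place, the chain $P + \overline{P} = 1$ and the reading off of the canonical form are immediate, and the conclusion that the group is super-active follows at once.
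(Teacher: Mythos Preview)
Your proposal is correct and follows essentially the same route as the paper: split into the all-alliance and all-conflict cases, observe that the PST has only two levels, and fold to obtain $P + \overline{P} = 1$. The paper treats the two cases in parallel without explicitly naming the $P + \overline{P}$ pattern, whereas you unify them with the $\bullet$ notation, but the argument is the same.
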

\begin{proof}
We consider the homogenous groups, where all the subjects are connected with alliance (alliance groups) and conflict (conflict groups) relationship, separately.

Without loss of generallity, we suggest that there are $n$ subjects $a_1, a_2, ..., a_n$.

\textit{Alliance groups}.
The polynomial corresponding to the alliance group of $n$ subject is $a_1 a_2 ...a_n$.
Next we construct the diagonal form and apply folding procedure:

\[\begin{array}{*{20}{c}}
   {} & {[{a_1}][{a_2}]...[{a_n}]} & {}  \\
   {[{a_1}{a_2}...{a_n}]} & {} & { = [{a_1}{a_2}...{a_n}] + \overline {[{a_1}][{a_2}]...[{a_n}]}  = 1 \ .}  \\
\end{array}\]

Therefore the alliance groups are always super-active.

\textit{Conflict groups}.
The polynomial corresponding to the conflict group of $n$ subject is $a_1+ a_2 + ... + a_n$.
Next we construct the diagonal form and apply folding procedure:
\[\begin{array}{*{20}{c}}
   {} & {[{a_1}] + [{a_2}] + ... + [{a_n}]} & {}  \\
   {[{a_1} + {a_2} + ... + {a_n}]} & {} & {=} \\
   { [{a_1} + {a_2} + ... + {a_n}]   + } & {\overline {[{a_1}] + [{a_2}] + ... + [{a_n}]}  = 1 \ .} & {} 
\end{array}\]

Therefore the conflict groups are always super-active. 

Since both the alliance and the conflict groups are super-active, this lemma is proved. $\Box$
\end{proof}

However, there are non-homogenous super-active groups as well (see Appendix \ref{appen3}).

Summarizing this section, we note that subjects in the super-active groups cannot be controlled in their choices and the entire groups is uncontrolable. Therefore, once the super-active groups emerges, the only way to make it controllable is to change the relationships in the group.

\section{The Basic Control Schema of an Abstract Subject (BCSAS)  in the RGT}
\label{bcs}
We have presented the detailed description of the RGT including solution of the Forward and Inverse tasks. We have also considered the extream cases of decisions like putting a subject into frustration state or changing structure of a super-active group. As a final stroke, we summarize all the presented material in the form of \textit{Basic Control Schema of an Abstract Subject (BCSAS) in the RGT}. 
\begin{figure}
\centering
\includegraphics[height=8cm]{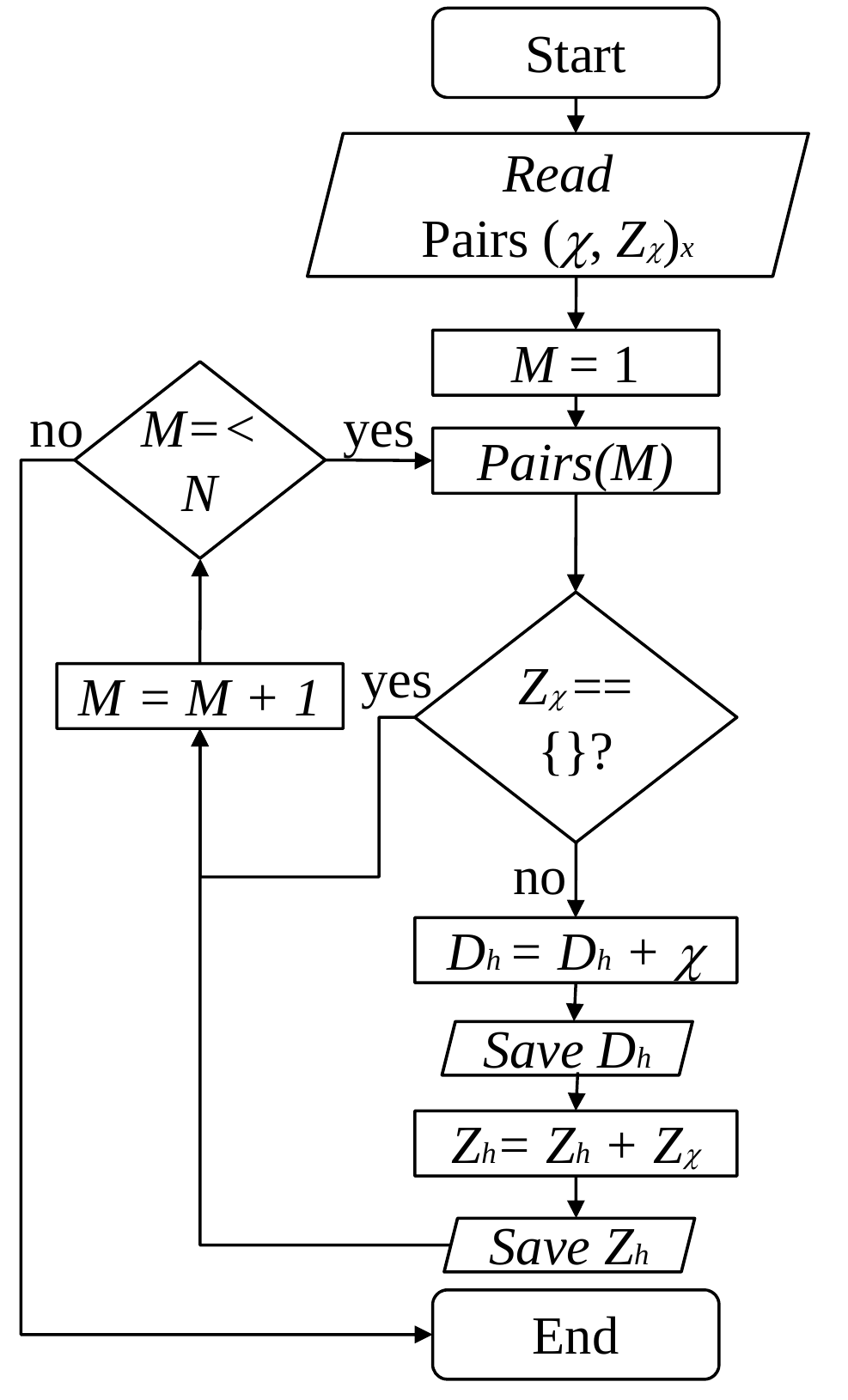}
\caption{The Block schema for extracting sets $D_h$ and $Z_h$.}
\label{fig4}
\end{figure}

The input comes from the environment and is formalized in the form of external Influences on the subject, the Boolean algebra of Alternatives and Structure of a Group.

Information about the Influences, Boolean algebra and Group Structure is propagated into the \textit{Decision Module}.  
The Decision Module implements solution of the Forward task.  Therefore the output set $\mathbb D$  of the Decision Module is the set of possible alternatives, which subject can choose under the given conditions. 

The information about Boolean algebra and Group Structure is propagated into the \textit{Influence Module}. The Influence Module solves the Inverse task. The output set $\mathbb D_h$ of the Influence Module is the set of the pairs $(\chi, \mathcal Z_{\chi})_x$, where $\chi$ is the target alternative, the set $\mathcal Z_{\chi}$ is the set of all the joint influences, resulting in selection of the target choice; and $x$ represents a subject variable. Each $(\chi, \mathcal Z_{\chi})_x$ represents a reflexive control strategy.

Therefore, the decision to put a subject into $frustration$ state is justified if it is impossible to make subject $x$ choose the target alternative $\chi$, i.e., if for pair $(\chi, \mathcal Z_{\chi})_x$ set $\mathcal Z_{\chi} = \{\}$, and subject $x$ should not choose any other alternative except for the target one.

\subsection{Schema for Iterative Algorithm to Obtain Output of the Influence Module}
\label{iterat}

The alternatives $\chi$ with corresponding non-empty sets $\mathcal Z_{\chi}$ are included into the set $\mathbb D_h$. Here we introduce set $\mathbb Z_h$ to store the non-empty sets $\mathcal Z_{\chi}$. 
The schema of the algorithm for extracting sets $\mathbb D_h$ and $\mathbb Z_h$ is 
presented in Fig. \ref{fig4}.  First the sets $\mathbb D_h$  and $\mathbb Z_h$ are empty: $\mathbb D_h = \{\}$ and $\mathbb Z_h = \{\}$. The algorithm reads the set of pairs $(\chi,\mathcal Z_{\chi})_x$ and stores it in array $Pairs(M)$, where $M$ is a counting variable, $N$ is the total number of pairs. Then it is checked for each pairs from array $Pairs$ whether set $\mathcal Z_{\chi}$ is empty: $\mathcal Z_{\chi} == \{\}?$ . If 'yes', the algorithm increments counting variable $M (M = M+1)$ and proceeds to the next pair from array Pairs. If 'no', then alternative $\chi$ is included into the set $\mathbb D_h$($\mathbb D_h = \mathbb D_h + \chi$), set $\mathbb D_h$ is saved, the set  $\mathcal Z_{\chi}$ is included into set $\mathbb Z_h$ ($\mathbb Z_h = \mathbb Z_h +  \mathcal Z_{\chi}$) and set $\mathbb Z_h$ is saved. The process is run while $M \leq N$.

In this iterative algorithm, we separately store the alternatives $\chi$ , which can be chosen by a certian subject, in the set $\mathbb D_h$ and the joint influences $\mathcal Z_{\chi}$ , which result in selection of alternative $\chi$, in the set $\mathbb Z_h$.

Therefore, we should modify the schema of Influence Module in BCSAS as follows. We present elaborated schema, where sub-module "Solution: $\mathbb D_h$" is accompanied with sub-module "Solution: $\mathbb Z_h$". Together these sub-modules are included into the "Solutions" sub-module.

BCSAS is the fundamental schema of an abstract subject, which is used through out the RGT. The BCSAS is presented in Fig.\ref{fig311}.

This concludes the overview of RGT and description of tasks within the scope of the general theory. Therefore, we continue with application of the RGT to the mixed groups of humans and robots.

\begin{figure}
\centering
\includegraphics[height=4cm]{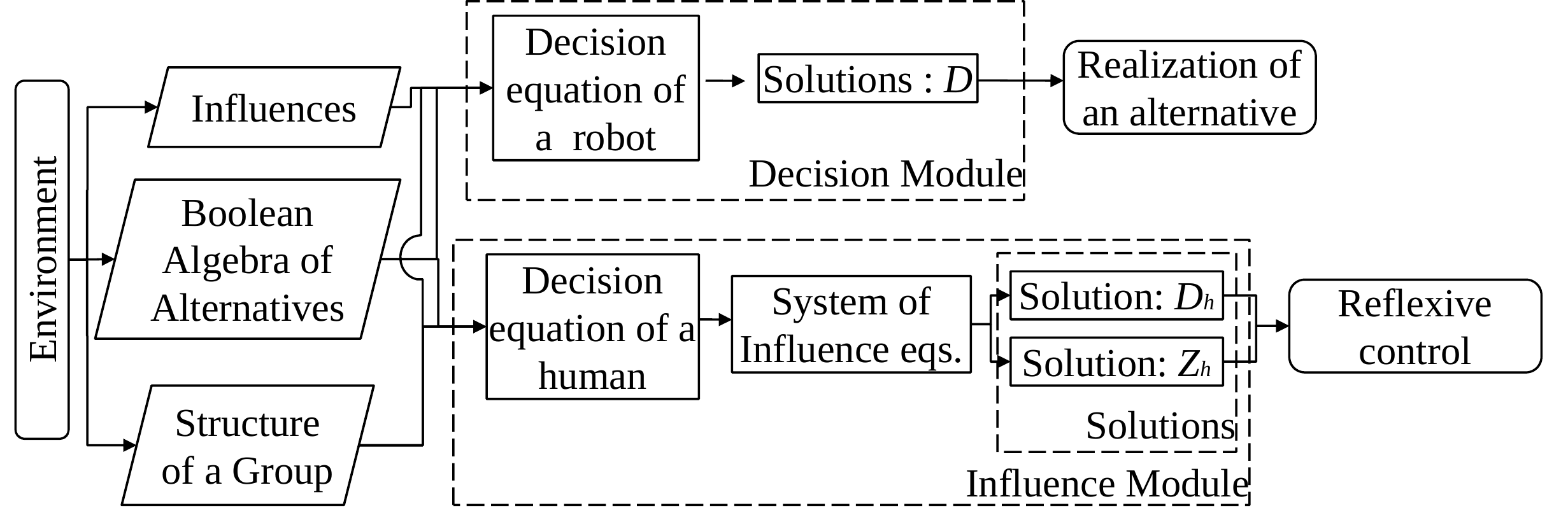}
\caption{The Basic Control Schema of an Abstract Subject (BSCAS).}
\label{fig311}
\end{figure}

\section{Defining Robots in RGT}
\label{robot}
As we have noted in the Introduction section, the goal of the robots in mixed groups of humans and robots is to refrain human subject from choosing risky actions, which might result in injuries or even threaten live.

It is considered by default that robot follows the program of behavior. 
Such program consists of at least three modules. 
The Module 1 implements robot's ability of human-like decision-making based on the RGT. 
The Module 2 contains the rules, which refrain robot from making a harm to human beings. 
The Module 3 predicts the choice of each human subject and suggests the possible reflexive control strategies. 

The Modules 1 and 3 are inhereted from the BCSAS of an Abstract Individual. They correspond to Decision Module and Influence Module of the BCSAS (Fig. \ref{fig311}), respectively. Therefore all the properties and meaning of outputs of the Modules 1 and 3 are the same as the ones for Decision and Influence modules, respectively.

The Module 2 is the new module, which is intrinsic for robotic agents studied in the context of mixed groups of humans and robots. This module is responsible for extraction of only harmless or non-risky alternatives for human subject.

We suggest to apply Asimov's Three Laws of robotics \cite{asi}, which formulate the basics of the Module 2:\\
1) a robot may not injure a human being or, through inaction, allow a human being to come to harm; \\
2) a robot must obey any orders given to it by human beings, except where such orders would conflict with the First Law;\\
3) a robot must protect its own existence as long as such protection does not conflict with the First or Second Law.

We consider that these laws are intrinsic part of robot’s "mind", which cannot be erased or corrupted by any means. 

The interaction of Modules 1 and 2 is performed in the Interaction Module 1. The interaction of Modules 3 and 2 is implements in the Interaction Module 2.

The Boolean algebra is filtered according to Asimov's laws in Module 2. The output of Module 2 is set $\mathbb U$ of approved alternatives.  This data is then propagated into interaction modules.
\begin{figure}
\includegraphics[height=9cm]{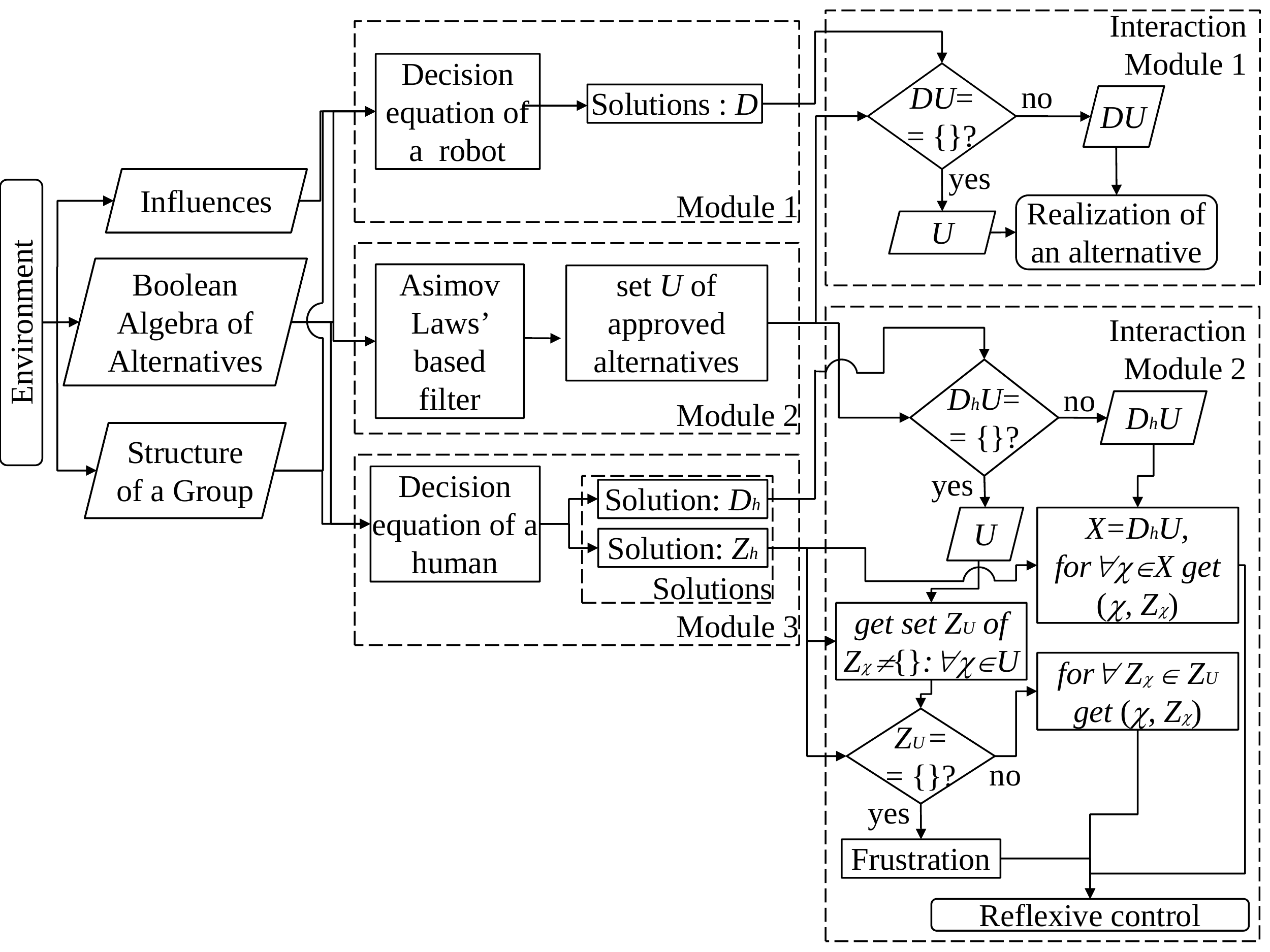}
\caption{The Basic Control Schema of a Robotic Agent (BCSRA).}
\label{fig3}
\end{figure}

The output of the Module 1 is set $\mathbb D$ of alternatives, which robot has to choose under the given joint influences. In the Interaction Module 1, the conjunction of sets $\mathbb D$ and $\mathbb U$ is performed: $\mathbb D \cap \mathbb U = \mathbb{DU}$. If set $\mathbb{DU}$ is not empty set, this means that there are aproved alternatives among the alternatives that robot should choose in accordance with the joint influences. Therefore, robot can implement any alternative from the set $\mathbb{DU}$. If set $\mathbb{DU}$ is empty, this means that under given joint influences robot cannot choose any approved alternative, therefore robot will choose an alternative from set $\mathbb U$. This is how the Interaction Module 1 works.

The output of the Module 3 contains sets $\mathbb D_h$ and $\mathbb Z_h$. The goal of the robot is to refrain human subjects from choosing risky alternative. This can be done by convincing human subjects to choose alternatives from the set $\mathbb U$. First, we check whether $\mathbb D_h$ contains any approved alternative. We do so by performing conjunction of sets $\mathbb D_h$ and $\mathbb U$: $\mathbb D_h \cap \mathbb U = \mathbb D_h \mathbb U$. 

If set $\mathbb D_h \mathbb U$ is not empty, then it means that it is possible to make a human subject to choose some non-risky alternative. Therefore, we should choose the corresponding reflexive control strategy from the set $\mathbb Z_h$. However, if set  $\mathbb D_h \mathbb U$ is empty, we have to find the reflexive control strategy that will make human subject to select approved alternative from set $\mathbb U$. For this purpose, we construct set $\mathbb Z_U$ by including all the joint influences $\mathcal Z_{\chi}$ for approved alternatives: $\mathcal Z_{\chi} \in \mathbb Z_U \Leftrightarrow \chi \in \mathbb U$. Next we check whether set $\mathbb Z_U$ is empty. If set $\mathbb Z_U$ is empty this means it is impossible to convince a human subject to choose non-risky alternative. Therefore, the only option of reflexive control in this case is to put this subject into frustration state. However, if set $\mathbb Z_U$ is not empty, this means that there exist at least one reflexive control strategy that results in selection of alternative from the set of the approved (non-risky) ones.

Therefore, the BCSRA inherits the entire structure of the BCSAS and augments it with Module 2 of Asimov's Laws together with Interaction Modules 1 and 2.

The original schema of robot's control system has been recently presented in \cite{taras}. The BCSRA is extended version of the original schema. The BCSRA provides comprehensive approach of how Forward and Inverse tasks are solved in the robot's "mind".

Thus, in this section we have presented the formalization of robotic agent in the RGT. We outlined the specific features of robotic agents, which distinguish them from other subjects. Furthermore, we provided detailed explanation of how the Forward and Inverse tasks are solved in the framrework of control system (BCSRA) of robots.

Next, we proceed with consideration of sample sutiations of interactions between humans and robots.

\section{Extended Sample Analysis of Mixed Groups}
\label{analysis}
Here we elaborate two examples, presented in the previous study \cite{taras}, of how robots in the mixed groups can make humans refrain from risky actions. We discuss the application of the extended schema of robot's control system and provide explicit derivation of reflexive control strategies, which has been applied in these examples in the prevous study \cite{taras}.

\subsection{Robots Baby-Sitters}
\label{rbb}
Suppose robots have to play a part of baby-sitters by looking after the kids. We consider a mixed group of two kids and two robots. Each robot is looking after a particular kid.
Having finished the game, kids are considering what to do next. They choose between ``to compete climbing the high tree'' (action $\alpha$) and ``to play with a ball'' (action $\beta$). Together actions $\alpha$ and $\beta$ represent the active state 1=$\{\alpha,  \beta\} = \{\alpha\}+\{\beta\}$.
Therefore the Boolean algebra of alternatives consists of four elements: 1) the alternative $\{\alpha\}$ is to climb the tree; 2) the alternative $\{\beta\}$ is to play with a ball; 3) the alternative $1=\{\alpha, \beta\}$ means that a kid is hesitating what to do; and 4) the alternative $0 = \{\}$ means to take a rest.

We consider that each kid considers his robot as ally and another kid and his robot as the competitors. The kids are subjects $a$ and $c$, while robots are subjects $b$ and $d$. The relationship graph is presented in Fig.~\ref{fig:fig4}.     
\begin{figure}
\centering
\includegraphics[height=2cm]{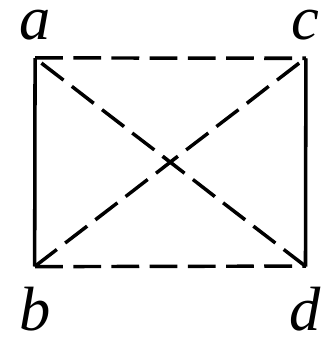}
\caption{The relationship graph for robots baby-sitters examples.}
\label{fig:fig4}
\end{figure}

Next we calculate the diagonal form and fold it in order to obtain decision equation for each subject:
\[\begin{array}{*{20}{c}}
   {} & {} & {[a][b]} & {} & {[c][d]} & {}  \\
   {} & {[ab]} & {} & { + [cd]} & {} & {}  \\
   {[ab + cd]} & {} & {} & {} & {} & { = ab + cd \; .}  \\
\end{array}\]

From two actions $\alpha$ and $\beta$, action $\alpha$ is a risky action, since a kid can fall from the tree and this is real threat for his health or even life. Therefore according to Asimov's laws, robots cannot allow kids to start the competition. Thus, robots have to convince kids not to choose alternative $\{\alpha\}$. In terms of alternatives, the Asimov's laws serve like filters which filter out the risky alternatives. The remaining alternatives are included into set $\mathbb U$. In this case, $\mathbb U = \{\{\beta\},\{\}\}$.

Next we solve the Inverse taks, regarding alternatives $\{\beta\}$ and $\{\}$. 
We conduct the analysis regarding kid $a$. This analysis can be further extended for kid $c$ in the similar manner.

\textit{Solution of the Inverse task for kid $a$ with approved alternatives as target choice.} The decision equation for kid $a$ is $a = ab+cd$. First, we transform it into canonical form: $a = (b+cd)a + cd\overline{a}$.

Next we consider system of influence equations:
\begin{empheq}[left=\empheqlbrace]{align}
 b+cd = \chi \label{sysBS1a} \\
cd = \chi,  \label{sysBS1b} 
\end{empheq}

where alternative $\chi \in \mathbb U$.

Regarding eq.(\ref{sysBS1b}), eq.(\ref{sysBS1a}) is transformed into equation 
\begin{equation}
\label{rbeq1}
b + \chi = \chi
\end{equation}

The solution of eq.(\ref{rbeq1}) directly follows from Lemma \ref{lemma1}: $\chi \supseteq b \supseteq 0$.
Therefore for $\chi = \{\beta\}$ and $\chi = \{\}$ the solutions are $ \{\beta\} \supseteq b \supseteq 0$ and  $b = 0$, respectively. 

The eq.(\ref{sysBS1b}) can be instantly solved according to Lemma \ref{lemma2}: $\chi d + \overline{\chi} \ \overline{d} \supseteq c \supseteq \chi$.

\textit{Consider $\chi = \{\beta\}$ first}. Then $\{\beta\}d  + \{\alpha\} \overline{d} \supseteq c \supseteq \{\beta\}$. By varying values of variable $d$, we obtain all the pairs $(c,d)$:

d = 1: $ \{\beta\} \supseteq c \supseteq \{\beta\} \Rightarrow c = \{\beta\}$. Therefore the solution is pair $(\{\beta\},1)$;

d = 0: $ \{\alpha\} \supseteq c \supseteq \{\beta\}$. Since $\{\alpha\} \cap \{\beta\} = \{\}$, there is no solution;

d =$\{\alpha\}$ : $0 \supseteq c \supseteq \{\beta\}$. Since $\{\beta\} \supseteq \{\}$, there is no solution;

d =$\{\beta\}$ : $1 \supseteq c \supseteq \{\beta\}$. Therefore there are two solutions $(1,\{\beta\})$ and $(\{\beta\},\{\beta\})$.

Therefore equation $cd = \{\beta\}$ has three solutions $(\{\beta\},1)$, $(1,\{\beta\})$ and $(\{\beta\},\{\beta\})$.

Thus, we have solved both equations from system (\ref{sysBS1a}-\ref{sysBS1b}). The solutions of this system are the triplets $(b,c,d)$ of joint influences, which are all possible combinations of solutions of both equations. Since there are two solution of eq.(\ref{sysBS1a}) and three solutions of eq.(\ref{sysBS1b}), there are six triplets $(b,c,d)$ in total: $(0, \{\beta\},1)$ and $(\{\beta\}, \{\beta\},1)$; $(0, 1, \{\beta\})$ and $(\{\beta\}, 1, \{\beta\})$; $(0, \{\beta\}, \{\beta\})$ and $(\{\beta\}, \{\beta\}, \{\beta\})$.

\textit{Now we consider the case, when $\chi = 0 =\{\}$}. Then $\overline{d} \supseteq c \supseteq 0$.
We obtain pairs $(c,d)$ for all values of variable $d$:

$d = 1$: $\overline{1} \supseteq c \supseteq 0 \Rightarrow c = 0$. Thus, there is only one solution (0,1);

$d = 0$: $1 \supseteq c \supseteq 0$. Thus, there are four solutions $(1,0), (\{\alpha\},0), (\{\beta\},0)$ and $(1,0)$;

$d = \{\alpha\}$: $\{\beta\} \supseteq c \supseteq 0$. Thus, there are four solutions $(\{\beta\},\{\alpha\})$ and $(0,\{\alpha\})$;

$d = \{\beta\}$: $\{\alpha\} \supseteq d \supseteq 0$. Thus, there are four solutions $(\{\alpha\},\{\beta\})$ and $(0,\{\beta\})$.

In total, equation $cd = 0$ has 9 solutions. Therefore system (\ref{sysBS3a}-\ref{sysBS3b}) also has 9 solutions as triplets $(b,c,d)$: $(0,1,0)$, $(0,0,0)$, $(0,0,\{\alpha\})$, $(0,0,\{\beta\})$, $(0,0,1)$, $(0,\{\alpha\},\{\beta\})$,
$(0,\{\alpha\},0)$, $(0,\{\beta\},\{\alpha\})$ and $(0,\{\beta\},0)$.

We have considered two cases, when both upper and lower bounds of the interval of decision equation equal to the same alternative. Now we discuss a new situation, when variable $a$ should take not a single value, but several values. In this case, we should find the joint influences $(b,c,d)$ that result in selection of either alternative $\{\beta\}$ or $\{\}$. Since, $\{\beta\} \supseteq \{\}$, we need to find all the triplets $(b,c,d)$, resulting in the solution of decision equation as interval 
$\{\beta\} \supseteq a \supseteq \{\}$. Thus, $\{\beta\} \supseteq a^* \supseteq \{\}$. 

Therefore, we need to solve the following system of equations:
\begin{empheq}[left=\empheqlbrace]{align}
b+cd = \{\beta\} \label{sysBS3a} \\
cd = 0.  \label{sysBS3b} 
\end{empheq}

The eq.(\ref{sysBS3a}) turns into equality $b = \{\beta\}$, and we need to solve eq.(\ref{sysBS3b}). However, this equation has been already solved in the previous example. Therefore we obtian the solutions of the system (\ref{sysBS3a}-\ref{sysBS3b}): $(\{\beta\},1,0),(\{\beta\},0,0)$, $(\{\beta\},0,\{\alpha\}), \\ (\{\beta\},0,\{\beta\}),(\{\beta\},0,1), (\{\beta\},\{\alpha\},\{\beta\}), (\{\beta\},\{\alpha\},0)$,  $(\{\beta\},\{\beta\},\{\alpha\})$ and \\ $(\{\beta\},\{\beta\},0)$.

Comparing solutions of all three system of influence equation, we can see that there are four remarkable solutions  $(\{\beta\},\{\beta\},\{\beta\})$ and $(\{\beta\},\{\},\{\beta\})$; $(\{\beta\},1,\{\beta\})$ and $(\{\beta\},\{\alpha\},\{\beta\})$. The first pair of solution results in choice of only alternative $\{\beta\}$, while second pair of solutions results in selection of eighter alternative $\{\beta\}$ or alternative $\{\}$. These four solutions together illustrate that if $b = d = \{\beta\}$, it is guaranteed that regardless of influence of kid $c$, kid $a$ will choose either of approved alternatives.

By analogy, we can see that among solutions of system (\ref{sysBS1a}-\ref{sysBS1b}) with $\chi = \{\}$, there are four solutions $(0,1,0)$,$(0,0,0)$, $(0,\{\alpha\},0)$ and $(0,\{\beta\},0)$. Therefore, if $b = d = 0$, kid $a$ will choose alternative $0=\{\}$ regardless of influence of kid $c$.

These two examples of binding variables $b$ and $d$ were considered in \textit{Scenario 1} and \textit{Scenario 2} of sample situation with robot baby-sitters, originally presented in \cite{taras}. 

Summarizing the results of this section, we have shown that robots can successfully control kids' behavior by refraining them from doing risky actions. The basic of this control is entirely based on the proposed schema of robot's control system. We have analyzed all the possible reflexive control strategies by solving three systems of influence equation: two systems regarding a single alternative and one system regarding the interval of alternatives. Therefore, we have shown how the Inverse task can be effectively solved by our proposed algorithm in situation similar to the real conditions. 

\subsection{Mountain-Climbers and Rescue Robot} 
\label{mcrr}
We consider that there are two climbers in the mountain and rescue robot. The climbers and robot are communicating via radio. One of the climbers (subject $b$) got into difficult situation and needs help. Suggest, he fell into the rift because the edge of the rift was covered with ice. The rift is not too deep and there is a thick layer of snow on the bottom, therefore climber is not hurt, but he cannot get out of the rift himself. The second climber (subject $a$) wants to rescue his friend himself (action $\alpha$), which is risky action. The second option is that robot will perform rescue mission (action $\beta$). Since inaction is inappropriate solution according to the First Law, the set $\mathbb U$ of approved alternatives for robot includes only alternative $\{\beta\}$. The goal of the robot is to refrain the climber $a$ from choosing alernative $\{\alpha\}$ and perform rescue mission itself.

We suggest that from the beginning all subjects are in alliance. The corresponding graph is presented in Fig.~\ref{fig:fig1}c and its polynomial is $abc$. Therefore by definition it is homogenous group and, consequently, it is super-active group according to Lemma \ref{lemma3}.


Thus, any subject in the group is in active state. Therefore, group is uncontrollable (see Section \ref{supac}). In this case, robot makes decision to change his relationship with the climber $b$ from alliance to conflict. Robot can do that, for instance, by not responding to climber's orders. 

\textit{Which reflexive control leads to frustration state?} Then the polynomial corresponding to the new group is $a(b+c)$.  This polynomial has been already broadly discussed in the Section \ref{frust}. Therefore, we know decision equation for subject $a$: $a = (b+c)a + \overline{a}$. We have shown as well that subject $a$ can choose only alternative $1 = \{\alpha,\beta\}$, if appropriate joint influences are applied (see Section \ref{frust}), overwise subject $a$ is in frustration state and cannot make any choice. Therefore, in order to put subject $a$ into frustration state, the reflexive control strategy should $NOT$ be selected from the list of solutions (Section \ref{frust}): $(\{\beta\},\{\alpha\})$; $(1,\{\alpha\})$;  $(\{\alpha\},\{\beta\})$; $(1,\{\beta\})$;  $(0,1)$; $(\{\alpha\},1)$; $(\{\beta\},1)$; $(1,1)$ and $(1,0)$.

Here we provide two examples of such joint influences $(b,c)$: $(\{\alpha\}, \{\alpha\}) \Rightarrow (\{\alpha\}+\{\alpha\}) = \{\alpha\} \subset 1$ and $(\{\beta\}, \{\}) \Rightarrow (\{\beta\}+\{\}) = \{\beta\} \subset 1$.

\textit{Whether robot can complete mission regardless of joint influences of other subjects?}
The decision equation for robot $c$ is $c = c + (b+\overline{a})\overline{c}$. The corresponding solution interval is 
$1\supseteq c \supseteq (b +\overline{a})$.

Here we analyze all 16 possible reflexive control strategies $(a,b)$ that climbers can apply to robot $c$.
\subsubsection{Examples with emtpy set $\mathbb D \mathbb U$.}
For $(0,b)$, there will be the same situation regardless of value of variable $b$ : $1\supseteq c \supseteq (b +\overline{0}) \Rightarrow 1\supseteq c \supseteq (b +1) \Rightarrow c = 1$. 

For  $(a,1)$, there will be the same situation regardless of value of variable $a$ : $1\supseteq c \supseteq (1+\overline{a}) \Rightarrow c = 1$.

For $(\{\alpha\},\{\alpha\})$: $1\supseteq c \supseteq (\{\alpha\} +\overline{\{\alpha\}}) \Rightarrow 1\supseteq c \supseteq (\{\alpha\} +\{\beta\}) \Rightarrow  c = 1$.

For $(\{\beta\},\{\beta\})$: $1\supseteq c \supseteq (\{\beta\} +\overline{\{\beta\}}) \Rightarrow 1\supseteq c \supseteq (\{\beta\} +\{\alpha\}) \Rightarrow c = 1$.
Therefore in these cases  set $\mathbb D = \{\{\alpha,\beta\}\}$. 

Next we consider other pairs $(a,b)$.

$(1,\{\alpha\})$: $1\supseteq c \supseteq (\{\alpha\} +\overline{1}) \Rightarrow 1\supseteq c \supseteq \{\alpha\}$.
Here  set $\mathbb D = \{\{\alpha,\beta\}, \{\alpha\}\}$. 

$(\{\beta\},\{\alpha\})$: $1\supseteq c \supseteq (\{\alpha\} +\overline{\{\beta\}}) \Rightarrow 1\supseteq c \supseteq \{\alpha\}$. Here  set $\mathbb D = \{\{\alpha,\beta\}, \{\alpha\}\}$. 

$(\{\beta\},0)$: $1\supseteq c \supseteq (0 +\overline{\{\beta\}}) \Rightarrow 1\supseteq c \supseteq \{\alpha\}$. Therefore,  set $\mathbb D = \{\{\alpha,\beta\}, \{\alpha\}\}$.

Since $\mathbb U = \{\{\beta\}\}$,  $\mathbb D \mathbb U = \{\}$ for all the cases considered above, robot will choose alternative $\{\beta\}$ from the set $\mathbb U$.

\subsubsection{ Examples with non-empty set $\mathbb D \mathbb U$.}
Consider the following pairs $(a,b)$:

$(1,\{\beta\})$: $1\supseteq c \supseteq (\{\beta\} +\overline{1}) \Rightarrow 1\supseteq c \supseteq \{\beta\}$.
Therefore,  set $\mathbb D = \{\{\alpha,\beta\}, \{\beta\}\}$. 

$(1,0)$: $1\supseteq c \supseteq (0 +\overline{1}) \Rightarrow 1\supseteq c \supseteq 0$.
Thus,  set $\mathbb D = \{\{\alpha,\beta\}, \{\alpha\}, \{\beta\}, \{\}\}$.

$(\{\alpha\},\{\beta\})$: $1\supseteq c \supseteq (\{\beta\} +\overline{\{\alpha\}}) \Rightarrow 1\supseteq c \supseteq \{\beta\}$. Thus,  set $\mathbb D = \{\{\alpha,\beta\}, \{\beta\} \}$.

$(\{\alpha\},\{\beta\})$: $1\supseteq c \supseteq (\{\beta\} +\overline{\{\alpha\}}) \Rightarrow 1\supseteq c \supseteq \{\beta\}$. Thus,  set $\mathbb D = \{\{\alpha,\beta\}, \{\beta\} \}$. 

$(\{\alpha\},0)$: $1\supseteq c \supseteq (0 +\overline{\{\alpha\}}) \Rightarrow 1\supseteq c \supseteq \{\beta\}$. Thus,  set $\mathbb D = \{\{\alpha,\beta\}, \{\beta\} \}$.

Since $\mathbb U = \{\{\beta\}\}$,  $\mathbb D \mathbb U = \{\{\beta\}\}$ for all the cases considered above, robot will choose alternative $\{\beta\}$ from the set $\mathbb D \mathbb U$.\\

Thus, we have shown that under all 16 reflexive control strategies $(a,b)$, robot $c$ can choose the alternative $\{\beta\}$, which is to perform the rescue mission itself. Therefore robot will choose alternative $\{\beta\}$ regardless of the joint influences $(a,b)$ of the climbers.

The discussed example illustrates how robot can transform uncontrollable group into controllable one by manipulating the relationships in the group. In the controllable group by its influence on the human subjects, robot can refrain the climber $a$ from risky action to rescue climber $b$. Robot achieves its goal by putting climber $a$ into frustration state, in which climber $a$ cannot make any decision. On the other hand, set $\mathbb U$ of approved alternatives guarantees that robot itself will choose the option with no risk for humans and implement it regardless of climber's influence.

Therefore, in this section we have illustrated robot's ability to refrain human being from risky actions and to perform these risky actions itself. This proves that our approach achieves both goals of robotic agent: 1) to refrain people from risky actions and 2) to perform risky actions itself regardless of human's influences.

\section{Discussion and Conclusion}

Summarizing, the results of this paper, we outline the most important of them. 

First of all, we have introduced the Inverse task and developed the ultimate methods to solve it.

We have provided a comprehensive tutorial to the \textbf{brand new Reflexive Game Theory} recently formulated and proposed by Vladimir Lefebvre \cite{lef1,lef2,lef3,lef4}. The tutoral contains the detailed description of the Forward and Inverse tasks together with methods to solve them. 

We propose control schemas for both abstract subject (BCSAS) and robotic agent (BCSRA). These schemas were specially designed to incorporate solution of the Forward and Inverse tasks, thus providing us with autonomous units (individuals, subjects, agents) capable of making decisions in the human-like manner. We have shown that robotic agents based on BCSRA can be easily included into the mixed groups of humans and robots and effectively serve their fundamental goals (refraining humans from risky actions and, if needed, perform  the risky acions itself).

Therefore, we consider that present study provides the comprehensive overview of the classic RGT proposed by Vladimir Lefebvre \cite{lef1,lef2,lef3,lef4} and newly developed self-consistent framework for analysis of different kinds of groups and societies, including human social groups and mixed groups of humans and robots together with application tutorial of this new framework. 

This framework is entirely based on the principles of the RGT and brings together all its elements. The solution of the Inverse task, presented in this paper, plays a crutial role in formation of this framework.  Therefore, by having the Inverse task as one of its fundamentals, this framework illustrates the role of the Inverse task and its relationship with other issues considered in the RGT.




\section*{Appendix}
\appendix
\section{When sets $A$ and $B$ are functions of less than total number of subject minus one variables}
\label{appen1}
Consider groups of four subjects $a,b,c$ and $d$. Suggest the polynomial corresponding to this group is $b(a+d)+c$. 
Next we construct diagonal form and perform folding operation:

\[\begin{array}{*{20}{c}}
   {} & {} & {} & {[a] + [d]} & {} & {}  \\
   {} & {} & {[b][a + d]} & {} & {} & {}  \\
   {} & {[b(a + d)]} & {} & {} & { + [c]} & {}  \\
   {[b(a + d) + c]} & {} & {} & {} & {} &  =   \\
\end{array}\]

\[\begin{array}{*{20}{c}}
   {} & {} & {[b]([a + d] + \overline {[a] + [d]} )} & {} & {} \\
   {} & {[b(a + d)]} & {} & { + [c]} & {} \\
   {[b(a + d) + c]} & {} & {} & {} &  =  \\
\end{array}\]

\[\begin{array}{*{20}{c}}
   {} & {} & {[b]} & {} & {} \\
   {} & {[b(a + d)]} & {} & { + [c]} & {} \\
   {[b(a + d) + c]} & {} & {} & {} &  =  \\
\end{array}\]

\[\begin{array}{*{20}{c}}
   { = b(a+d)+c + \overline{b(a+d)+\overline{b}+c} }  \\
\end{array}\]

Next we simplify the resultant expression of diagonal form folding:
\[\begin{array}{*{20}{c}}
   {b(a+d)+c + \overline{b(a+d)+\overline{b}+c} = b(a+d)+c + \overline{b(a+d)}\overline{c}b =}  \\
   {b(a+d)+cb + c\overline{b} + \overline{b(a+d)}\overline{c}b = b((a+d) + c + \overline{b(a+d)}\overline{c}) + c\overline{b} = }  \\
{b((a+d)c + (a+d)\overline{c} + c + (\overline{b} + \overline{(a+d)})\overline{c}) + c\overline{b} = }  \\
{b((a+d)c + (a+d)\overline{c} + c + \overline{b}\overline{c} + \overline{(a+d)}\overline{c}) + c\overline{b} = }  \\
{b(c + (a+d)c + ((a+d)+\overline{(a+d)})\overline{c}) + c\overline{b} = b((a+d)c + c + \overline{c}) + c\overline{b} =}  \\
{b((a+d)c + 1) + c\overline{b} = b + c\overline{b}} = b + c\\
\end{array}\]

Consequently,
\[\begin{array}{*{20}{c}}
   {} & {[b(a + d)] + \overline {[b]}  + [c]} & {}  \\
   {[b(a + d) + c]} & {} &  {=b + c}\\
\end{array}\]

Therefore, the decision equation includes only two subject variables instead of four. Consequenly, for subjects $a$ and $d$ the decision equations in canonical forms are 
\begin{eqnarray}
a = (b + c)a + (b + c)\overline{a} \\
\label{dc11a}
d = (b + c)d + (b + c)\overline{d}
\label{dc12a}
\end{eqnarray}

Thus, the sets $A$ and $B$ for subjects $a$ and $d$ are equal. The sets $A$ and $B$ are functions of only variables $b$ and $c$: $A = A(b,c) =b + c\overline{b} $ and $B = B(b,c) =b + c\overline{b}$.

The canonical forms of decision equations for subjects $b$ and $c$ are:
\begin{eqnarray}
b = b + c\overline{b}\\
\label{dc13a}
c = c + b\overline{c}
\label{dc14a}
\end{eqnarray}

Therefore, set $A=1$ for both subjects. Set B is a functions of a single variable: $B(c) = c$ and $B(b) = b$ for subjects $b$ and $c$, respectively.

\section{Example of non-homogenous super-active groups}
\label{appen3}
Here we provide an example of non-homogenous super-active group.

Consider the group of four subject $a,b,c$ and $d$, which is described by polynomial $c(ab+b)$. Let us build the diagonal form and perform its folding:

\[\begin{array}{*{20}{c}}
   {} & {} & {} & {[a][b]} & {} & {}  \\
   {} & {} & {[ab]} & {} & { + [d]} & {}  \\
   {} & {[c][ab + d]} & {} & {} & {} & {}  \\
   {[c(ab + d)]} & {} & {} & {} & {} &  =   \\
\end{array}\]

\[\begin{array}{*{20}{c}}
   {} & {} & {([ab] + \overline {[a][b]}) + [d]} & {}  \\
   {} & {[c][ab + d]} & {} & {} \\
   { = [c(ab + d)]} & {} & {}  & {=} \\
\end{array}\]

\[\begin{array}{*{20}{c}}
   {} & {} & {1} & {}  \\
   {} & {[c][ab + d]} & {} & {} \\
   { = [c(ab + d)]} & {} & {}  & {=} \\
\end{array}\]

\[\begin{array}{*{20}{c}}
   { = [c(ab + d)] + \overline{[c][ab + d]} = 1 \ \ \Box}
\end{array}\]

\end{document}